\newtheorem{thm}{Theorem}[section]
\newtheorem{lem}[thm]{Lemma}
\newtheorem{definition}[thm]{Definition}
\newcommand{\E}{{\cal{E}}}
\def\E{\mathbb{E}}
\def\P{\mathbb{P}}
\def\1{\bm{1}}
\def\S{\mathcal{S}}
\begin{document}
\pagestyle{empty}

\title{On the Limits of Sequential Testing in High Dimensions}

\author{\IEEEauthorblockN{Matt Malloy}
\IEEEauthorblockA{ Electrical and Computer Engineering\\
University of Wisconsin-Madison\\
Email: mmalloy@wisc.edu}
\and
\IEEEauthorblockN{Robert Nowak}
\IEEEauthorblockA{
Electrical and Computer Engineering\\
University of Wisconsin-Madison\\
Email: nowak@ece.wisc.edu}}
\maketitle
\vspace{-2cm}


\begin{abstract}
This paper presents results pertaining to sequential methods for support recovery of sparse signals in noise.  Specifically, we show that \emph{any} sequential measurement procedure fails provided the average number of measurements per dimension grows slower then $D(f_0||f_1)^{-1} \log s$ where $s$ is the level of sparsity, and $D(f_0||f_1)$ the Kullback–-Leibler divergence between the underlying distributions.  For comparison, we show any \emph{non-sequential} procedure fails provided the number of measurements grows at a rate less than $D(f_1||f_0)^{-1} \log n $, where $n$ is the total dimension of the problem.  Lastly, we show that a simple procedure termed \emph{sequential thresholding} guarantees exact support recovery provided the average number of measurements per dimension grows faster than $D(f_0||f_1)^{-1} ( \log s + \log \log n)$, a mere additive factor more than the lower bound.
\end{abstract}

\section{Introduction}
High-dimensional signal support recovery is a fundamental problem arising in many aspects of science and engineering.  The goal of the basic problem is to determine, based on noisy observations, a sparse set of elements that somehow differ from the others.

In this paper we study the following problem.  Consider a support set $\S \subset \{1,...,n\}$ and a random variable $y_{i,j}$ such that
\begin{eqnarray} \label{eqn:underlyingstats}
    y_{i,j} \sim \left\{
           \begin{array}{ll}
             f_0(\cdot) & i \not \in \mathcal{S} \\
             f_1(\cdot) & i \in \mathcal{S}
           \end{array}
         \right.
\end{eqnarray}
where $f_0(\cdot)$ and $f_1(\cdot)$ are probability measures on $\mathcal{Y}$, and $j$ indexes multiple independent measurements of any component $i \in \{1,...,n\}$.  The dimension of the problem, $n$, is large --  perhaps thousands or millions or more -- but the support set $\S$ is sparse in the sense that the number of elements following $f_1$ is much less than the dimension, i.e., $|\S| = s \ll n$. The goal of the sparse recovery problem is to identify the set $\S$.

In a \emph{non-sequential} setting $m\geq 1$ independent observations of each component are made ($y_{i,1},...,y_{i,m}$ are observed for each $i$) and the fundamental limits of reliable recovery are readily characterized in terms of Kullback-Leibler divergence and dimension.

Sequential approaches to the high dimensional support recovery problem have been given much attention recently (see \cite{Haupt}, \cite{MalloyISIT}, \cite{5688486}, \cite{BarronISIT}, etc). In the \emph{sequential} setting, the decision to observe $y_{i,j}$ is based on prior observations, i.e. $y_{i,1},...,y_{i,j-1}$.  Herein lies the advantage of sequential methods: if prior measurements indicate a particular component belongs (or doesn't belong) to $\S$ with sufficient certainty, measurement of that component can cease, and resources can be diverted to a more uncertain element.

The results presented in this paper are in terms of asymptotic rate at which the average number of measurements per dimension, denoted $m$, must increase with $n$ to ensure exact recovery of $\S$ for any fixed distributions $f_0$ and $f_1$.  The main contributions are \emph{1)} to present a necessary condition for success of any sequential procedure in the sparse setting, \emph{2)} show success of a simple sequential procedure first presented in \cite{MalloyISIT} is guaranteed provided the average number of measurements per dimension is within a small additive factor of the necessary condition for any sequential procedure, and compare this procedure to the known optimal sequential probability ratio test, and \emph{3)} lastly, compare these results to the performance limits of any non-sequential procedure.  Table \ref{table1} summarizes these results.  

\begin{table}[ht!]
\caption{Average number of measurements per dimension for exact recovery} 
\centering
\begin{tabular}{p{2.8cm}|p{2.2cm}|p{2.3cm}}
 \hline
  \hline
  \emph{Non-sequential}            &  $m \geq \frac{\log n }{D(f_1||f_0)}$ & necessary \vspace{.2cm} \\
  \hline
  \emph{Sequential}  &  $m \geq \frac{\log s }{D(f_0||f_1)}$ & necessary \vspace{.2cm}\\
  \hline
  \emph{Sequential Thresholding}   &  $m > \frac{\log s }{D(f_0||f_1)} + \frac{\log \log n }{D(f_0||f_1)}$ & sufficient if $s$ \newline sub-linear in $n$ \\
  \hline
\end{tabular}
\label{table1}
\end{table}

Our results are striking primarily for two reasons.  First, \emph{sequential} procedures succeed when the number of measurements per dimension increases at a rate logarithmic in the level of \emph{sparsity}, i.e. $\log s$.  In contrast, \emph{non-sequential} procedures require the average number of measurements per dimension to increase at a rate logarithmic in the \emph{dimension}, i.e. $\log n$.  For signals where sparsity is sublinear in dimension, the gains of sequential methods are polynomial; in scenarios where the sparsity grows logarithmically, the gains are exponential.

Secondly, and perhaps equally as surprising, a simple procedure dubbed sequential thresholding achieves nearly optimal performance provided minor constraints on the level of sparsity are met (specifically, that $s$ is sublinear in $n$). In terms of the average number of measurements per dimension, the procedure comes within an additive factor, doubly logarithmic in dimension, of the lower bound of any sequential procedure.

\section{Problem Formulation}
Let $\S$ be a sparse subset of $\{1,...,n\}$ with cardinality $s = |\S|$.   For any index $i \in \{1,...,n\}$, the random variable $y_{i,j}$ is independent, identically distributed according to (\ref{eqn:underlyingstats}).  That is, for all $j$, $y_{i,j}$ follows distribution $f_1(\cdot)$ if $i$ belongs to $\S$, and follows $f_0(\cdot)$ otherwise.  We refer to $f_0$ as the null distribution, and $f_1$ the alternative.

In this paper, we limit our analysis to exact recovery of $\S$ using coordinate wise methods.  Defining $\hat{\S}$ as an estimate of $\S$, the family wise error rate is given as:
\begin{eqnarray} \nonumber
  \P({\cal{E}}) = \P(\hat{\S} \neq \S) =  \P\left(\bigcup_{i \not \in \S} {\cal{E}}_{i} \cup \bigcup_{i \in \S } {\cal{E}}_{i} \right)
\end{eqnarray}
where ${\cal{E}}_{i}$, $i \not \in \S$ is a false positive error event and ${\cal{E}}_{i}$, $i \in \S$ a false negative error event.  To simplify notation, we define the false positive and false negative probabilities in the usual manner: $\alpha = \P({\cal{E}}_{i}| i \not \in \S)$, and $\beta =  \P({\cal{E}}_{i}|i \in \S)$.

The test to decide if component $i$ belongs to $\S$ is based on the normalized log-likelihood ratio.  For $y_j$ distributed i.i.d. $f_0$ or $f_1$,
\begin{eqnarray} \nonumber
  t^{(m)} := \frac{1}{m} \sum_{j=1}^m \log \frac{f_1(y_{j})}{f_0(y_{j})}
\end{eqnarray}
which is a function of $ (y_{1},...,y_{m}) \in \mathcal{Y}^m$.  The superscript $m$ explicitly indicates the number of measurements used to form the likelihood ratio and is suppressed when unambiguous. The log-likelihood ratio is compared against a scalar threshold $\gamma$ to hypothesize if a component follows $f_0$ or $f_1$:
\begin{eqnarray} \nonumber
    t   \gtrless^{f_1}_{f_0} \gamma.
\end{eqnarray}

Additionally, the Kullback-Liebler divergence of distribution $f_0$ from $f_1$ is defined as:
\begin{eqnarray} \nonumber
  D(f_1||f_0) = \E_1\left[ \log \frac{f_1(y)}{f_0(y)} \right]
\end{eqnarray}
where $\E_1\left[\cdot \right]$ is expectation with respect to distribution $f_1$.

\subsection{Measurement procedures}
To be precise in characterizing a measurement procedure, we continue with three definitions.
\begin{definition}
\emph{Measurement Procedure.}
A procedure, denoted $\pi$, used to determine if $y_{i,j}$ is measured.  If $\pi_{i,j} = 1$, then $y_{i,j}$ is measured.  Conversely, if $\pi_{i,j} = 0$, then $y_{i,j}$ is not measured.
\end{definition}
\begin{definition}
\emph{Non-sequential measurement procedure.}
Any measurement procedure $\pi$ such that $\pi_{i,j}$ is not a function of $y_{i,j'}$ for any $j'$.
\end{definition}
\begin{definition}
\emph{Sequential measurement procedure.}
A measurement procedure $\pi$ in which $\pi_{i,j}$ is allowed to depend on prior measurements, specifically, $\pi_{i,j}: \{y_{i,1},...,y_{i,j-1}\} \mapsto \{0,1\}$.
\end{definition}

\subsection{Measurement Budget}
In order to make fair comparison between measurement schemes, we limit the total number of observations of $y_{i,j}$ in expectation.  For any procedure $\pi$, we require
\begin{eqnarray} \label{eqn:budget}
 \mathbb{E}\left[\sum_{i,j} \pi_{i,j} \right]\leq nm
\end{eqnarray}
for some integer $m$.  This implies, on average, we use $m$ or fewer observations per dimension.

\section{Sequential Thresholding}
Sequential thresholding, first presented in \cite{MalloyISIT}, relies on a simple bisection idea.  The procedure consists of a series of $K$ measurement passes, where each pass eliminates from consideration a proportion of the components measured on the prior pass.  After the last pass the procedure terminates and the remaining components are taken as the estimate of $\S$.  Sequential thresholding is described in the following algorithm.

\begin{algorithm}[h]
\caption{}
\begin{algorithmic} \label{alg:sds}
\STATE{input: $K \approx \log n$ steps, threshold $\gamma$}
\STATE{initialize: ${\bm{\pi}}_{i,1} = \bf{1}$} for all $i$
\FOR{$k = 1,\dots,K$}
\FOR{$\{i : \bm{\pi}_{i,k} = 1\}$ }
\STATE{{\bf measure}: $t_i$}
\STATE{{\bf threshold}: ${\bm{\pi}}_{i,k+1}  = \left\{
                                                \begin{array}{ll}
                                                  \bm{1} & t_i > \gamma \\
                                                  \bm{0} & \mathrm{else}
                                                \end{array}
                                              \right.
$}
\ENDFOR
\ENDFOR
\STATE{output: $\hat{\S} = \{i:\bm{\pi}_{i,K+1} =\bm{1}\}$}
\end{algorithmic}
\end{algorithm}

\subsection{Example of Sequential Thresholding}
Sequential thresholding is perhaps best illustrated by example.  Consider a simple case with measurement budget $m=2$, $f_0 \sim {\cal{N}}(0,1)$ and $f_1 \sim {\cal{N}}(\theta,1)$ for some $\theta > 0$.

On the first pass, the procedures measures $y_{i,1}$ for all $i$, using $n$ measurements (half of the total budget as $mn = 2n$).  On subsequent passes, the procedure observes $y_{i,k}$ if $\pi_{i,k} = 1$.  To set $\pi_{i,k+1}$ the procedure thresholds observations that fall below, for example, $\gamma = 0$, eliminating a proportion (approximately half in this case) of components following the null distribution:
\begin{eqnarray} \nonumber
  \pi_{i,k+1} = \left\{
                \begin{array}{ll}
                  1 & y_{i,k} > 0 \\
                  0 & y_{i,k} \leq 0.
                \end{array}
              \right.
\end{eqnarray}
In words, if a measurement of component $i$ falls below the threshold on any pass, that component is \emph{not} measured for the remainder of the procedure, and not included in the estimate of $\S$. After $K \approx \log n $ passes, the procedure terminates, and estimates $\S$ as the set of indices that have not been eliminated from consideration:  $\hat{\S} = \{i:\pi_{i,K+1} =1\}$.

\subsection{Details of Sequential Thresholding}
Sequential thresholding requires two inputs: \emph{1)} $K$, the number of passes, and \emph{2)} $\gamma$, a threshold.  We define $\rho$ as the probability a component following the null is eliminated on any given pass, which is related to the threshold as
\begin{eqnarray} \nonumber
  \P(t_i^{(\rho m)} \leq \gamma | i \not \in \S) = \rho.
\end{eqnarray}
Additionally, we restrict our consideration to $\rho \in [1/2,1)$ -- that is, the probability a null component is eliminated on a given pass is one half or greater.

On each pass, $\rho m$ (which we assume to be an integer) measurements of a subset of components are made, and the log-likelihood ratio $t_i^{(\rho m)}$ is formed for each component.  
As measurements are made in blocks of size $\rho m$, we use \emph{boldface} $\bm{\pi}_{i,k}$ to indicate a block of measurements are taken of component $i$ on the $k$th measurement pass.  $\bm{\pi}_{i,k}$ can be interpreted as a vector:
\begin{eqnarray}  \nonumber
  \bm{\pi}_{i,k} = (\pi_{i,(k-1)\rho m+1},...,\pi_{i,\rho m}).
\end{eqnarray}

With $\gamma$ and $K \approx \log n$ as inputs, sequential thresholding operates as follows.  First, the procedure initializes, setting $\bm{\pi}_{i,1} = \bm{1}$.  For passes $k = 1,...,K$ the procedure measures $t_{i}^{(\rho m)}$ if $\bm{\pi}_{i,k} = \bm{1}$.  To set $\bm{\pi}_{i,k+1}$, the procedure tests the corresponding log-likelihood ratio against the threshold $\gamma$:
\begin{eqnarray} \nonumber
  \bm{\pi}_{i,k+1} &=& \left\{
                      \begin{array}{ll}
                        \bm{1} &\mathrm{if} \quad  t_i^{(\rho m)} > \gamma \\
                        \bm{0} &\mathrm{else}.
                      \end{array}
                    \right. \\ \nonumber
\end{eqnarray}
That is, if $t_i^{(\rho m)}$ is below $\gamma$, no further measurements of component $i$ are taken.  Otherwise, component $i$ is measured on the subsequent pass.  By definition of $\gamma$, approximately $\rho$ times the number of remaining components following $f_0$ will be eliminated on each pass; if $s \ll n$, each thresholding step eliminates approximately $\rho$ times the total number of components remaining.

After pass $K$, the procedure terminates and estimates $\S$ as the indices still under consideration: $\hat{\S} = \{i:\bm{\pi}_{i,K+1} =\bm{1}\}$.

\subsection{Measurement Budget}
Sequential thresholding satisfies the measurement budget in (\ref{eqn:budget}) provided $s$ grows sublinearly with $n$.  For brevity, we argue the procedure comes arbitrarily close to satisfying the measurement budget for large $n$:
\begin{eqnarray} \nonumber
  \E\left[\sum_{i,j} \pi_{i,j} \right]  &\leq & \sum_{k=0}^{K-1} \left((1-\rho)^k (n-s) \rho m + s \rho m \right) \\
  & \leq & m(n-s) +  msK \rho. \nonumber
\end{eqnarray}
Letting $K = \log n$, the procedure comes arbitrarily close to satisfying the constraint as $n$ grows large. To be rigorous in showing the procedure satisfies (\ref{eqn:budget}), $m$ can be replaced by $m-1$, and the analysis throughout holds.

\subsection{Ability of Sequential Thresholding}
We present the first of the three main theorems of the paper to quantify the performance of sequential thresholding.
\begin{thm}
\emph{Ability of sequential thresholding.} Provided
\begin{eqnarray} \label{eqn:SeqThesCond}
  m > \frac{\log s }{D(f_0||f_1)} + \frac{\log \log n}{D(f_0||f_1)}
\end{eqnarray}
sequential thresholding recovers $\S$ with high probability.
More precisely, if
\begin{eqnarray} \nonumber
  \lim_{n\rightarrow \infty} \frac{m}{\log \left(s \log n\right)} > \frac{1}{D(f_0||f_1)}
\end{eqnarray}
then $\P({\cal{E}}) \rightarrow 0$.
\end{thm}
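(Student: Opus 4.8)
\emph{Proof proposal.} I would start from the union bound: $\P(\mathcal{E})$ is at most $(n-s)$ times the probability that a fixed null component survives all $K$ passes, plus $s$ times the probability that a fixed signal component is eliminated on some pass, so it suffices to drive each of these two terms to $0$. The false positives are immediate. Because every pass draws a fresh, independent block of $\rho m$ samples, a null component that reaches pass $k$ survives it with probability exactly $1-\rho$ (the defining property of $\gamma$), independently across passes; hence it survives all $K$ passes with probability $(1-\rho)^K$. Taking $K$ to be a large enough constant multiple of $\log n$ (the constant depending only on $\rho$) makes $(n-s)(1-\rho)^K \to 0$, while keeping $K=\Theta(\log n)$ as the budget analysis of Section III requires.

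The false-negative term is the crux. A union bound over the at most $K$ passes a surviving signal component can reach shows it is eliminated somewhere with probability at most $K\beta$, where $\beta := \P(t_i^{(\rho m)} \le \gamma \mid i\in\S)$ is the per-pass miss probability. I would bound $\beta$ by a single tilted Markov step at exponent $-1$: writing $L_j = \log\frac{f_1(y_j)}{f_0(y_j)}$ with $y_j$ i.i.d.\ $f_1$, the elementary identity $\E_1[e^{-L_j}] = \int f_0 = 1$ yields
\[
  \beta \;=\; \P_1\!\left(\sum_{j=1}^{\rho m} L_j \le \rho m \gamma\right) \;\le\; e^{\rho m \gamma}.
\]
It then remains to locate $\gamma$. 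Since $\rho\in[1/2,1)$ is a fixed constant and, under $f_0$, $t^{(\rho m)} \to \E_0\!\left[\log\frac{f_1}{f_0}\right] = -D(f_0||f_1)$ almost surely as $m\to\infty$, the $\rho$-quantile $\gamma=\gamma_m$ fixed by $\P_0(t^{(\rho m)}\le\gamma_m)=\rho$ satisfies $\gamma_m \to -D(f_0||f_1)$ (assuming, as throughout, $0 < D(f_0||f_1) < \infty$). Therefore $\beta \le e^{-\rho m D(f_0||f_1)(1+o(1))}$.

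Putting the pieces together, $s K \beta \le (s\log n)\, e^{-\rho m D(f_0||f_1)(1+o(1))}$. If the hypothesis holds with $\lim_{n\to\infty} m/\log(s\log n) = c/D(f_0||f_1)$ for some $c>1$, I would fix $\rho\in[1/2,1)$ close enough to $1$ that $\rho c > 1$; then $\rho m D(f_0||f_1) \ge (1+\delta)\log(s\log n)$ for large $n$ and some $\delta>0$, so the false-negative term is at most $(s\log n)^{-\delta/2}\to 0$, and combined with the false-positive bound this gives $\P(\mathcal{E})\to 0$. The step I expect to be the main obstacle is the control of $\gamma_m$: because the threshold drifts with $n$ through $m$, one cannot simply invoke Cram\'er's theorem at a fixed threshold, and one must check that the $o(1)$ error in $\gamma_m = -D(f_0||f_1)+o(1)$ does not swamp the exponent --- which it does not, since $m \asymp \log(s\log n)$ and the error enters multiplicatively on $m D(f_0||f_1)$. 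A secondary point requiring care is the coupled choice of parameters: $\rho$ must be pushed toward $1$ to recover the sharp constant $1/D(f_0||f_1)$ rather than the $2/D(f_0||f_1)$ that $\rho=1/2$ would give, while $K=\Theta(\log n)$ has to be large enough to suppress the $\approx n$ false positives yet compatible with the measurement budget when $s$ is sublinear in $n$.
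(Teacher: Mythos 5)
Your proposal is correct and follows essentially the same route as the paper: the union bound $\P(\mathcal{E})\le(n-s)(1-\rho)^K+sK\beta$ with $K=\Theta(\log n)$, a per-pass Stein-type exponent $\beta\le e^{-\rho m D(f_0||f_1)(1+o(1))}$, and the final limit $\rho\to 1$ to recover the sharp constant. The only substantive difference is that where the paper simply cites the Chernoff--Stein lemma for the per-pass miss probability, you prove the needed bound directly via the change-of-measure identity $\E_1[e^{-L}]=1$ plus a law-of-large-numbers argument locating the quantile $\gamma_m\to -D(f_0||f_1)$ --- a self-contained justification of exactly the step the paper leaves implicit (since Chernoff--Stein as stated concerns the optimal test, not the specific quantile-threshold test used here).
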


\begin{proof}
From a union bound on the family wise error rate, we have
\begin{eqnarray} \label{eqn:fwer_st}
  \P({\cal{E}}) &\leq& (n-s) \alpha + s \beta.
\end{eqnarray}
Employing sequential thresholding, from the definition of $\gamma$, $\alpha = {(1-\rho)^K}$ and
\begin{eqnarray} \nonumber
\beta &=& \P \left(\bigcup_{k=1}^K  t_i^{(\rho m)} < \gamma  | i \in \S\right) \\
&\leq& K \P \left( t_i^{(\rho m)} < \gamma | i \in \S \right) \nonumber
\end{eqnarray}
where the inequality follows from a union bound.

We can further bound the false negative error event using the Chernoff-Stein Lemma \cite{Cover:1991:EIT:129837}, p. 384.  Consider a simple binary hypothesis test with a fixed probability of false positive at $\alpha_0 = 1-\rho$.  By the Chernoff-Stein Lemma, the false negative probability is then given as
\begin{eqnarray}  \nonumber
\P \left( t_i^{(\rho m)} < \gamma \mid i \in \S \right) \doteq e^{-\rho m D(f_0||f_1)}
\end{eqnarray}
where $a \doteq e^{-mD}$ is equivalent to
\begin{eqnarray} \nonumber
  \lim_{m\rightarrow \infty} \frac{1}{m} \log a = -D.
\end{eqnarray}
This implies, for any $\epsilon_1 > 0$, for sufficiently large $m$,
\begin{eqnarray}  \nonumber
\P \left( t_i^{(\rho m)} < \gamma \mid i \in \S \right) \leq e^{-\rho m (D(f_0||f_1) - \epsilon_1)}.
\end{eqnarray}
Letting $K = (1+\epsilon_2) \log n$, for sufficiently large $n$ and $m$, (\ref{eqn:fwer_st}) becomes
\begin{eqnarray} \nonumber
  \P({\cal{E}}) &\leq& \frac{(n-s)}{n^{(1+\epsilon_2)}}  + s (1+\epsilon_2) \log (n) \: e^{-\rho m (D(f_0||f_1) - \epsilon_1)}.
\end{eqnarray}
Hence, $\P({\cal{E}})$ goes to zero provided
\begin{eqnarray} \nonumber
  m \geq \frac{ \log( (1+\epsilon_2) s\log n) }{\rho (D(f_0||f_1)-\epsilon_1)}
\end{eqnarray}
which, as $\epsilon_1$ and $\epsilon_2$ can be made arbitrarily small, and $\rho$ can be made arbitrarily close to $1$, directly gives the theorem:
\begin{eqnarray} \nonumber
  m > \frac{\log( s\log n) }{D(f_0||f_1)}.
\end{eqnarray}
\end{proof}

\section{Lower Bound on Sequential Procedures}
In this section we derive a lower bound on the rate at which $m$ must grow with $n$ for any sequential procedure, and relate Sequential Thesholding to the high dimensional extension of the well known sequential probability ratio test (SPRT).   

\subsection{Limitation of any sequential procedure}
The lower bound for any sequential procedure is presented in the following theorem.
\begin{thm}
Consider any sequential measurement procedure.  Provided
\begin{eqnarray} \nonumber
  m < \frac{\log s}{D(f_0||f_1)}
\end{eqnarray}
the family wise error rate tends to one.  More precisely, if
\begin{eqnarray} \label{eqn:condreq}
  \lim_{n \rightarrow \infty} \frac{m}{\log s} < \frac{1}{D(f_0||f_1)}
\end{eqnarray}
then $\P({\cal{E}}) \rightarrow 1$.
\end{thm}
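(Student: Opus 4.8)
The plan is to decouple the problem into $n$ independent single-coordinate sequential tests, lower bound the expected sample size of each such test via a Wald-type information inequality, and then play this estimate against the measurement budget. Because $\pi_{i,j}$ and the event $\{i\in\hat{\S}\}$ depend only on $y_{i,1},y_{i,2},\dots$, the procedure restricted to coordinate $i$ is a sequential test of $f_0$ against $f_1$: let $N_i$ be the (random) number of samples it draws from coordinate $i$, and let $\alpha_i$, $\beta_i$ be, respectively, the probability it outputs $i\in\hat{\S}$ when fed i.i.d.\ $f_0$ data and the probability it outputs $i\notin\hat{\S}$ when fed i.i.d.\ $f_1$ data. These quantities are intrinsic to the procedure and do not depend on $\S$, and since the coordinates are independent, for every support set $\S$
\begin{eqnarray} \nonumber
  \P(\hat{\S}=\S)\ =\ \prod_{i\notin\S}(1-\alpha_i)\ \prod_{i\in\S}(1-\beta_i),
\end{eqnarray}
so that $\P(\mathcal{E})=1-\P(\hat{\S}=\S)$ when the true support is $\S$.

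Suppose, toward a contradiction, that the family wise error rate for the least favorable support does not tend to one; then along some subsequence there is $\eta>0$ such that $\P(\hat{\S}=\S)\ge\eta$ for every $\S$. Choosing $\S$ to consist of the $s$ coordinates with the largest $\beta_i$ and using $1-x\le e^{-x}$ shows the sum of these $s$ values of $\beta_i$ is at most $\log(1/\eta)$, so all but at most $s$ coordinates satisfy $\beta_i\le\log(1/\eta)/s$. Applying the same inequality to $\prod_{i\notin\S}(1-\alpha_i)$, with $\{1,\dots,n\}\setminus\S$ chosen to collect the $n-s$ largest $\alpha_i$, gives $\sum_i\alpha_i\le(1+o(1))\log(1/\eta)=O(1)$, so for every fixed $\epsilon>0$ all but $O_\epsilon(1)$ coordinates satisfy $\alpha_i<\epsilon$. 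Let $G$ be the set of coordinates on which both bounds hold; since $s=o(n)$, $|G|=(1-o(1))n$.

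The crux, and the step I expect to be the main obstacle, is a lower bound on $\E_0[N_i]$ (expectation under $f_0$ data for coordinate $i$). By Wald's likelihood-ratio identity, the Kullback--Leibler divergence between the laws of the stopped sample path under $f_0$ and under $f_1$ equals $\E_0[N_i]\,D(f_0||f_1)$; applying the data-processing inequality to the binary classification decision then yields
\begin{eqnarray} \nonumber
  \E_0[N_i]\,D(f_0||f_1)\ \ge\ d\big(1-\alpha_i\,||\,\beta_i\big)\ \ge\ (1-\alpha_i)\log\frac{1}{\beta_i}-O(1),
\end{eqnarray}
where $d(a||b)=a\log\frac{a}{b}+(1-a)\log\frac{1-a}{1-b}$ is the binary relative entropy. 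The integrability/finiteness hypotheses behind Wald's identity are supplied by the budget, which forces $\E_0[N_i]<\infty$; coordinates on which the test fails to terminate under $f_1$ merely inflate $\beta_i$ and so lie outside $G$. Consequently, for $i\in G$,
\begin{eqnarray} \nonumber
  \E_0[N_i]\ \ge\ \frac{(1-\epsilon)\log s - O_\epsilon(1)}{D(f_0||f_1)}.
\end{eqnarray}

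Finally, fix any $\S$. Since every $i\notin\S$ receives i.i.d.\ $f_0$ data, the measurement budget (\ref{eqn:budget}) gives $\sum_{i\notin\S}\E_0[N_i]=\E\big[\sum_{i\notin\S}N_i\big]\le nm$. Restricting this sum to the $(1-o(1))n$ coordinates in $G\setminus\S$ yields
\begin{eqnarray} \nonumber
  nm\ \ge\ (1-o(1))\,n\cdot\frac{(1-\epsilon)\log s - O_\epsilon(1)}{D(f_0||f_1)},
\end{eqnarray}
so $m\ge(1-o(1))(1-\epsilon)\log s/D(f_0||f_1)$ along the subsequence. Letting $\epsilon\downarrow 0$ contradicts the hypothesis $\lim_{n\to\infty} m/\log s<1/D(f_0||f_1)$, so no such $\eta$ and subsequence can exist and $\P(\mathcal{E})\to 1$.
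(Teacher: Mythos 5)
Your proof follows essentially the same route as the paper's: lower-bound the expected per-coordinate sample size under $f_0$ via the Wald-type information inequality $\E_0[N]\,D(f_0||f_1)\ \geq\ \alpha\log\frac{\alpha}{1-\beta}+(1-\alpha)\log\frac{1-\alpha}{\beta}$, and play this against the measurement budget to force $\alpha$ or $\beta$ to be large enough that exact recovery fails. Your version is in fact somewhat more careful than the paper's --- allowing coordinate-dependent error probabilities $\alpha_i,\beta_i$ and running the contradiction through an explicit $\eta$ rather than assuming a symmetric test --- but the decomposition and the key lemma are the same.
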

\begin{proof}
First, we show conditions under which the family wise error rate goes to one:
\begin{eqnarray} \nonumber
\P({\cal{E}}) &=& \P\left(\bigcup_{i \not \in \S} {\cal{E}}_i \cup \bigcup_{i \in \S} {\cal{E}}_i \right) \\ \nonumber
&=& 1 - \P\left(\bigcap_{i \not \in \S } {\cal{E}}_i^c \cap \bigcap_{i \in \S} {\cal{E}}_i^c\right) \\  \label{eqnPEe}
&=& 1 - (1-\beta)^s (1-\alpha)^{n-s} \nonumber \\  
&\geq& 1-e^{-\beta s} e^{-\alpha(n-s)}
\end{eqnarray}
which goes to one provided either
\begin{eqnarray}  \label{eqn:alpha_req}
  \alpha > \frac{1}{n-s}
\qquad \qquad
  \beta > \frac{1}{s}.
\end{eqnarray}

Second, for a simple binary hypothesis test, we can bound the expected number of measurements of \emph{any} sequential procedure with false positive and false negative probabilities $\alpha$ and $\beta$.  To simplify notation, define:
\begin{eqnarray} \nonumber
\E_0[N] =  \E\left[\sum_{j} \pi_{i,j}|i \not \in \S\right]  \quad \E_1[N] =  \E\left[\sum_{j} \pi_{i,j}|i \in \S\right]
\end{eqnarray}
that is, $\E_0[N]$ and $\E_1[N]$ are the expected number of measurements under $f_0$ and $f_1$ respectively.  From \cite{SeqAnalysis} p.21, we have
\begin{eqnarray} \nonumber
  E_0[N] \geq \frac{1}{D(f_0||f_1)}\left(\alpha \log \frac{\alpha}{1-\beta} + (1-\alpha) \log \frac{1-\alpha}{\beta}\right)
\end{eqnarray}
which is derived from a simple argument using Jensen's inequality.  The total expected number of measurements, constrained by the measurement budget, is
\begin{eqnarray} \label{eqn:constra}
   (n-s) E_0[N] + s E_1[N] = E\left[\sum_{i,j} \pi_{i,j}  \right] \leq mn
\end{eqnarray}

Dropping the $s E_1[N]$ term from (\ref{eqn:constra}), we need to find conditions under which the inequality 
\begin{eqnarray}  \nonumber
  \frac{n-s}{D(f_0||f_1)}\left(\alpha \log \frac{\alpha}{1-\beta} + (1-\alpha) \log \frac{1-\alpha}{\beta}\right) \leq mn 
\end{eqnarray}
implies $\P({\cal{E}}) \rightarrow 1$.
Dividing by $n\log s$, the inequality becomes
\begin{eqnarray} \nonumber
\frac{n-s}{D(f_0||f_1) n \log s} \left(\alpha \log \frac{\alpha}{1-\beta} + (1-\alpha) \log \frac{1-\alpha}{\beta}\right) \leq  \frac{m}{\log s}.
\end{eqnarray}
Imposing the condition in (\ref{eqn:condreq}) and cancelling $D(f_0||f_1)$ from both sides, the above inequality requires
\begin{eqnarray} \label{eqn:condre33}
    \lim_{n\rightarrow \infty} \frac{n-s}{ n \log s} \left(\alpha \log \frac{\alpha}{1-\beta} + (1-\alpha) \log \frac{1-\alpha}{\beta}\right) < 1 .
\end{eqnarray}
It is sufficient to show that (\ref{eqn:condre33}) implies either $\alpha > \frac{1}{n-s}$ or $\beta > \frac{1}{s}$ in the high dimensional limit.

With this in mind, let $\beta = \frac{1-\epsilon_1}{s}$, and $\alpha =\frac{1-\epsilon_2}{n-s}$ for some $\epsilon_1, \epsilon_2 \in [0,1)$.  Taking the limit as $n \rightarrow \infty$ in (\ref{eqn:condre33}) and reducing terms we have:
\begin{eqnarray}
\lim_{n \rightarrow \infty} \left( \cdot \right) = 1
\end{eqnarray}
which contradicts (\ref{eqn:condre33}), and negates our assumption that both $\beta = \frac{1-\epsilon_1}{s}$ and $\alpha =\frac{1-\epsilon_2}{n-s}$ for $\epsilon_1, \epsilon_2  \in [0,1)$.  Hence, by (\ref{eqn:alpha_req}), the family wise error rate must go to one, completing the proof.
\end{proof}

\subsection{The SPRT}
The sequential probability ratio test can be extended from simple binary hypothesis tests to the high dimensional case by simply considering $n$ parallel SPRTs.  Each individual SPRT operates by continuing to measure a component if the corresponding likelihood ratio is within an upper and lower boundary, and terminating measurement otherwise.  For scalars $A$ and $B$
\begin{eqnarray} \nonumber
  \pi_{i,j+1} &=& \left\{
                      \begin{array}{ll}
                        1 &\mathrm{if} \quad  \frac{A}{j} < t_i^{(j)}(y_{i,1},...,y_{i,j}) < \frac{B}{j} \\
                        0 &\mathrm{else}
                      \end{array}
                    \right. \\ \nonumber
\end{eqnarray}
where $t_i^{(j)}$ is the normalized log-likelihood ratio comprised of \emph{all} prior measurements (unlike sequential thresholding, in which the likelihood ratio is only formed using measurements from a single pass).  If $t_i^{(j)} < A/j$, the SPRT labels index $i$ as \emph{not} belonging to $\S$, and if $t_i^{(j)} > B/j$, index $i$ is assigned to $\S$.  For a thorough discussion of the SPRT, see \cite{SeqAnalysis}.

Sequential probability ratio tests are optimal for binary hypothesis tests in terms of minimum expected number of measurements for any error probabilities $\alpha$ and $\beta$ (shown originally in \cite{1948}), and this optimality can be translated to the high dimensional case.  
Consider a single component $i$, and the corresponding binary hypothesis test.  To be thorough, we restate the optimal property of the SPRT in the following lemma.
\begin{lem} \label{lemma11}
\emph{Optimality of the SPRT for simple binary tests \cite{chernoffSeq} (p.63).}  Consider an SPRT with expected number of measurements $\E_0[N]$ and $ \E_1[N]$, and corresponding error probabilities $\alpha$ and $\beta$.  Any other sequential test with expected number of measurements $\E_0[N]'$ and $\E_1[N]'$ and error probabilities $\alpha' \leq \alpha$ and $\beta' \leq \beta$ will also have $\E_0[N]' \geq \E_0[N]$ and  $\E_1[N]' \geq \E_1[N]$.
\end{lem}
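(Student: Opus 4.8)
The plan is to use the classical Bayesian embedding of Wald and Wolfowitz. First I would fix the SPRT in the lemma; denote its log-likelihood-ratio boundaries by $A<0<B$, so that it stops and rejects $f_0$ the first time the running LLR $S_j=\sum_{k\le j}\log\frac{f_1(y_k)}{f_0(y_k)}$ exceeds $B$, and stops and accepts $f_0$ the first time $S_j$ drops below $A$. Since $D(f_0\|f_1)>0$ (and likewise $D(f_1\|f_0)>0$), the stopping time is a.s.\ finite with finite mean under both hypotheses, and the error probabilities $\alpha,\beta$ depend only on $(A,B)$, not on any prior.

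Next I would introduce, for an arbitrary prior weight $\xi\in(0,1)$ on the hypothesis $f_1$, the Bayes decision problem with observation cost $c>0$ per sample and terminal losses $L_0$ (for deciding in favor of $f_1$ when $f_0$ holds) and $L_1$ (for deciding in favor of $f_0$ when $f_1$ holds), and invoke the standard optimal-stopping description of its solution: written in terms of the posterior probability $\pi$ of $f_1$, the optimal risk-to-go $V(\pi)$ is concave, vanishes at $\pi\in\{0,1\}$, and lies below the tent-shaped stopping cost $\min(L_1\pi,\,L_0(1-\pi))$; hence the continuation set $\{\pi:V(\pi)<\min(L_1\pi,L_0(1-\pi))\}$ is an interval $(\pi_L,\pi_U)$, and since posterior odds equal prior odds times the likelihood ratio, ``continue while $\pi\in(\pi_L,\pi_U)$'' is exactly an SPRT in LLR coordinates with boundaries determined by $\xi,\pi_L,\pi_U$. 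The crucial point is that, as $(c,L_0,L_1)$ range over their admissible values, the induced boundary pair $(\pi_L,\pi_U)$ sweeps out all pairs with $0<\pi_L<\pi_U<1$; therefore, for the given $A,B$ and \emph{every} $\xi\in(0,1)$, one may choose $(c,L_0,L_1)$ so that the Bayes continuation interval becomes $\bigl(\tfrac{\xi e^{A}}{\xi e^{A}+1-\xi},\ \tfrac{\xi e^{B}}{\xi e^{B}+1-\xi}\bigr)$, which makes the one fixed SPRT the Bayes rule for prior $\xi$.

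Given that, the conclusion is short. Let $S'$ be any sequential test with $\alpha'\le\alpha$, $\beta'\le\beta$ and finite expected sample sizes $\E_0[N]',\E_1[N]'$. Optimality of the SPRT for the prior $\xi$ gives
\[
\begin{aligned}
&(1-\xi)\bigl(L_0\alpha+c\,\E_0[N]\bigr)+\xi\bigl(L_1\beta+c\,\E_1[N]\bigr)\\
&\qquad\le(1-\xi)\bigl(L_0\alpha'+c\,\E_0[N]'\bigr)+\xi\bigl(L_1\beta'+c\,\E_1[N]'\bigr).
\end{aligned}
\]
Using $\alpha'\le\alpha$ and $\beta'\le\beta$ to enlarge the right-hand side to its value at $(\alpha,\beta)$ and then cancelling the now-identical loss terms from both sides and dividing by $c$, we obtain $(1-\xi)\E_0[N]+\xi\E_1[N]\le(1-\xi)\E_0[N]'+\xi\E_1[N]'$ for every $\xi\in(0,1)$. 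Letting $\xi\downarrow0$ yields $\E_0[N]\le\E_0[N]'$ and letting $\xi\uparrow1$ yields $\E_1[N]\le\E_1[N]'$, which is the lemma. (Randomized competitors are covered automatically, since a mixture of tests has Bayes risk no smaller than that of its best component.)

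I expect the real work — and the only genuine obstacle — to be the optimal-stopping step of the second paragraph: establishing concavity and the interval form of the continuation region, verifying that the Bayes stopping time is a.s.\ finite, and, above all, confirming the surjectivity of $(c,L_0,L_1)\mapsto(\pi_L,\pi_U)$ that lets one and the same SPRT be Bayes for the entire range of priors. One also needs $\alpha+\beta<1$, so that a nondegenerate SPRT with the prescribed errors exists; this is implicit in the lemma. All of this is developed in Chernoff's monograph and, in its original form, by Wald and Wolfowitz, so in the paper I would cite that development and present only the short reduction above.
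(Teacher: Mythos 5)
The paper does not actually prove this lemma: it is stated as a restatement of the classical Wald--Wolfowitz optimality theorem and is dispatched with a citation to Chernoff's monograph (p.~63), so there is no in-paper argument to compare against. Your sketch is the standard proof of that cited result --- the Bayesian embedding in which, for each prior $\xi$, one tunes the sampling cost and terminal losses so that the given SPRT is exactly Bayes, then uses $\alpha'\leq\alpha$, $\beta'\leq\beta$ to cancel the loss terms and lets $\xi\to 0$ and $\xi\to 1$ to split the weighted sample-size inequality into the two separate ones. The structure is right, and you correctly isolate the genuinely hard step, namely the concavity/interval form of the continuation region and the surjectivity of $(c,L_0,L_1)\mapsto(\pi_L,\pi_U)$, which is precisely the content of the Wald--Wolfowitz lemma you propose to cite. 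Two small caveats: your limiting argument needs $\mathbb{E}_1[N']<\infty$ to extract $\mathbb{E}_0[N]\leq\mathbb{E}_0[N]'$ as $\xi\downarrow 0$ (otherwise the weighted inequality is vacuous and a separate, easy argument handles that case), and the nondegeneracy condition $\alpha+\beta<1$ should be stated explicitly rather than left implicit; with those noted, your reduction is a sound and more informative replacement for the paper's bare citation.
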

\noindent In short, no procedure with the smaller error probabilities can have fewer measurements in expectation than the SPRT.  To translate the optimality of the SPRT to the high dimensional case, we introduce the following lemma.
\begin{lem} \label{lemma22}
\emph{Optimality of the SPRT.}
Consider $n$ component-wise sequential probability ratio tests used to estimate $\S$ each with error probabilities $\alpha$ and $\beta$, and with a total of $\E[ \sum_{i,j} \pi_{i,j} ]$ measurements in expectation. Any other component wise test with $\alpha'\leq \alpha$ and $\beta'\leq \beta$ will also have expected number of measurements $\E[ \sum_{i,j} \pi_{i,j} ]' \geq \E[ \sum_{i,j} \pi_{i,j} ]$.
\end{lem}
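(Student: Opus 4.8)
The plan is to reduce the high-dimensional statement to the binary result of Lemma~\ref{lemma11} by exploiting the component-wise structure together with linearity of expectation. First I would note that, since each procedure is component-wise (the indicator $\pi_{i,j}$ depends only on $y_{i,1},\dots,y_{i,j-1}$) and since $y_{i,j}$ is i.i.d.\ $f_0$ when $i\notin\S$ and i.i.d.\ $f_1$ when $i\in\S$, the expected number of measurements spent on a coordinate depends on that coordinate only through whether it lies in $\S$. Writing $\E_0[N]$ and $\E_1[N]$ for the per-component expected sample sizes of the SPRT under $f_0$ and $f_1$, this gives
\begin{eqnarray} \nonumber
\E\left[\sum_{i,j}\pi_{i,j}\right] = (n-s)\,\E_0[N] + s\,\E_1[N],
\end{eqnarray}
and likewise $\E\left[\sum_{i,j}\pi_{i,j}\right]' = (n-s)\,\E_0[N]' + s\,\E_1[N]'$ for the competing component-wise procedure.

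Next I would apply Lemma~\ref{lemma11} one coordinate at a time. For a fixed component $i$, the SPRT restricted to $i$ is a binary sequential test with error probabilities $\alpha,\beta$ and expected sample sizes $\E_0[N],\E_1[N]$, while the competitor restricted to $i$ is some other binary sequential test whose error probabilities are at most $\alpha$ and $\beta$. Lemma~\ref{lemma11} then yields $\E_0[N]'\geq\E_0[N]$ and $\E_1[N]'\geq\E_1[N]$. Multiplying the first inequality by $n-s\geq 0$, the second by $s\geq 0$, and adding gives
\begin{eqnarray} \nonumber
(n-s)\,\E_0[N]' + s\,\E_1[N]' \;\geq\; (n-s)\,\E_0[N] + s\,\E_1[N],
\end{eqnarray}
which is exactly $\E\left[\sum_{i,j}\pi_{i,j}\right]'\geq\E\left[\sum_{i,j}\pi_{i,j}\right]$, the claimed inequality.

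The step that needs the most care is the first one: making precise that, after conditioning on $i\in\S$ or $i\notin\S$, the competing procedure restricted to component $i$ really is an admissible sequential test (valid stopping rule, the stated per-component error probabilities) to which Lemma~\ref{lemma11} applies, and that the decomposition of the total expected sample size into the $(n-s)\E_0[N]+sE_1[N]$ form is legitimate under the measurement-budget bookkeeping used elsewhere. Once that is in place the remaining argument is just the nonnegative combination of two scalar inequalities, so I do not anticipate a substantive obstacle beyond this bookkeeping.
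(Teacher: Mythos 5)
Your proposal is correct and follows essentially the same route as the paper: decompose the total expected sample size as $(n-s)\E_0[N] + s\E_1[N]$, invoke Lemma~\ref{lemma11} componentwise to get $\E_0[N]'\geq\E_0[N]$ and $\E_1[N]'\geq\E_1[N]$, and conclude by monotonicity of the nonnegative linear combination. The paper's own proof is just a terser version of this argument.
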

\begin{proof}
We can write the total expected number of measurements as:
\begin{eqnarray} \nonumber
  \E\left[ \sum_{i,j} \pi_{i,j} \right] = (n-s)\E_0[N]  + s \E_1[N]
\end{eqnarray}
which is monotonically increasing in both $\E_0[N]$ and $\E_1[N]$.  Together with \ref{lemma11}, this implies the lemma.
\end{proof}

\subsection{Comparison of the SPRT to Sequential Thresholding}
Although a fully rigorous proof is quite involved, using standard approximations for the sequential probability ratio test (again, see \cite{SeqAnalysis}) it is relatively straightforward to show the SPRT does achieve the lower bound presented above.

Sequential thresholding is similar in spirit to the SPRT.  In many scenarios, however, implementing the SPRT can be substantially more complicated, if not infeasible, when compared to sequential thresholding. To set the stopping boundaries, an SPRT requires knowledge of the underlying distributions as well as the level of sparsity $s$.  Even when these are available, only approximations relating error probabilities to the stopping boundaries can be derived in closed-form.

On the contrary, sequential thresholding does not require knowledge of $s$.  Since its sample requirements are within a factor a small factor of the lower bound, sequential thresholding is automatically {\em adaptive} to unknown levels of sparsity. Moreover, in practice, sequential thresholding needs only approximate knowledge of the distributions to operate (such that a substantial number of components that follow $f_0$ can be eliminated on each pass).

\section{Limitation of Non-Sequential Methods }
Our analysis would not be complete without comparison of sequential thresholding and the sequential lower bound to the performance limits of non-sequential methods.  To do so, we analyze performance of any non-sequential method using \emph{Chernoff Information}.
\begin{thm}
\emph{Limitation of non-sequential testing.}
Consider any non-sequential thresholding procedure. Provided
\begin{eqnarray} \label{eqn:asmpNS}
m < \frac{\log n}{D(f_1||f_0)}
\end{eqnarray}
the family wise error rate goes to 1.  To be precise, (\ref{eqn:asmpNS}) is equivalent to
\begin{eqnarray} \nonumber
  \lim_{n\rightarrow \infty} \frac{m}{\log n} < \frac{1}{D(f_1||f_0)}.
\end{eqnarray}
which implies $\lim_{n \rightarrow \infty} \P(\mathcal{E}) =1$. 
\end{thm}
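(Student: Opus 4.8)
The plan is to argue coordinate by coordinate, following the template used for the sequential lower bound. I would first reduce the claim to a statement about the per‑coordinate false‑positive probability $\alpha$, and then show that a likelihood‑ratio threshold test built from $m$ samples cannot push $\alpha$ below $1/n$ unless $m\,D(f_1||f_0) > \log n$. For the bulk of the argument I take each coordinate to receive exactly $m$ samples (the genuinely budgeted case is handled at the end), and I use that $s\to\infty$ with $s=o(n)$.

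Because the samples $\{y_{i,j}\}_j$ are independent across $i$ and the procedure is non‑sequential, the coordinate error events are mutually independent, so exactly as in the step producing~(\ref{eqnPEe}),
$$\P(\mathcal{E}) = 1-(1-\alpha)^{n-s}(1-\beta)^{s} \ge 1 - e^{-(n-s)\alpha}\,e^{-s\beta}.$$
Thus, as in~(\ref{eqn:alpha_req}), it suffices to show that in the limit either $\alpha > \frac{1}{n-s}$, or $\beta$ stays bounded away from $0$ (so that $s\beta\to\infty$ and again $\P(\mathcal{E})\to1$).

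Now consider the coordinate test, which thresholds $t^{(m)}=\frac1m\sum_{j=1}^m\log\frac{f_1(y_j)}{f_0(y_j)}$ at a level $\gamma=\gamma_n$; note $\E_1[t^{(m)}]=D(f_1||f_0)$. If $\gamma_n \ge D(f_1||f_0)$, then by the central limit theorem $\beta = \P_1(t^{(m)}<\gamma_n) \ge \P_1(t^{(m)}<D(f_1||f_0)) \to \frac12$, giving $s\beta\to\infty$. If $\gamma_n < D(f_1||f_0)$, then $\{t^{(m)}\ge D(f_1||f_0)\}\subseteq\{t^{(m)}\ge\gamma_n\}$, so
$$\alpha = \P_0\!\left(t^{(m)}\ge\gamma_n\right) \ge \P_0\!\left(t^{(m)}\ge D(f_1||f_0)\right),$$
and the right‑hand side now involves a \emph{fixed} threshold, whose large‑deviations exponent is the rate function $I_0$ of $\log(f_1/f_0)$ under $f_0$ evaluated at $D(f_1||f_0)$. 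The identity underlying everything is $I_0\big(D(f_1||f_0)\big)=D(f_1||f_0)$: with $\Lambda_0(\lambda)=\log\E_0[(f_1/f_0)^\lambda]$ one has $\Lambda_0(1)=0$ and $\Lambda_0'(1)=D(f_1||f_0)$, so $\sup_\lambda\{\lambda D(f_1||f_0)-\Lambda_0(\lambda)\}$ is attained at $\lambda=1$ and equals $D(f_1||f_0)$. Hence $\alpha \ge e^{-m(D(f_1||f_0)+o(1))}$, and when $\frac{m}{\log n}\to c<\frac{1}{D(f_1||f_0)}$ this yields $(n-s)\alpha \ge (n-s)\,n^{-cD(f_1||f_0)-o(1)}\to\infty$, i.e. $\alpha>\frac{1}{n-s}$ for all large $n$. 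In either branch, the displayed lower bound on $\P(\mathcal{E})$ forces $\P(\mathcal{E})\to1$.

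The main obstacle is the matching large‑deviations \emph{lower} bound on $\P_0(t^{(m)}\ge D(f_1||f_0))$, together with the recognition that $D(f_1||f_0)$ is exactly the largest false‑positive exponent consistent with a vanishing false‑negative probability (the Chernoff--Stein exponent in the Stein regime, where $\beta$ is held at a constant). I would make it rigorous by a change of measure to $f_1$ restricted to the slab $\{D(f_1||f_0)\le t^{(m)}\le D(f_1||f_0)+\eta\}$: on that event $e^{-mt^{(m)}}\ge e^{-m(D(f_1||f_0)+\eta)}$, so $\P_0(\text{slab}) \ge e^{-m(D(f_1||f_0)+\eta)}\,\P_1(\text{slab})$, and $\P_1(\text{slab})$ stays bounded below by a positive constant by the central limit theorem; then $\eta\downarrow0$ (against $\frac{m}{\log n}\to c$) recovers the exponent $D(f_1||f_0)$. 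A more tedious secondary point is to permit the procedure to spend the budget~(\ref{eqn:budget}) unevenly across coordinates: here a Markov argument shows that for every $\delta>0$ a fixed fraction of the null coordinates receive at most $(1+\delta)\frac{nm}{n-s}$ samples, and rerunning the dichotomy on that sub‑collection with $\delta$ small enough that $(1+\delta)c<1/D(f_1||f_0)$ again drives $\P(\mathcal{E})\to1$.
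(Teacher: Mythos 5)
Your proposal is correct and rests on the same two pillars as the paper's argument: the independence/union lower bound $\P(\mathcal{E})\ge 1-e^{-(n-s)\alpha}e^{-s\beta}$ reducing everything to the per-coordinate error probabilities, and the fact that $D(f_1\|f_0)$ is the largest achievable false-positive exponent for a log-likelihood-ratio threshold test that retains nontrivial power. Where you differ is in how that exponent bound is obtained. The paper parametrizes all operating points of the test by the tilted family $f_\lambda$, quotes the Chernoff-information exponents $\alpha\doteq e^{-mD(f_\lambda\|f_0)}$, $\beta\doteq e^{-mD(f_\lambda\|f_1)}$ from Cover--Thomas, and observes that $D(f_\lambda\|f_0)\le D(f_1\|f_0)$ for every $\lambda\in[0,1]$, so the min--max bound collapses to $\log n/D(f_1\|f_0)$. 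You instead split on the location of the threshold relative to $\E_1[t^{(m)}]=D(f_1\|f_0)$ and prove the needed large-deviations \emph{lower} bound on $\alpha$ directly by a change of measure to $f_1$ on the slab $\{D\le t^{(m)}\le D+\eta\}$; your computation $\Lambda_0(1)=0$, $\Lambda_0'(1)=D(f_1\|f_0)$, hence $I_0(D(f_1\|f_0))=D(f_1\|f_0)$, is exactly right. Your route is more self-contained and actually patches three things the paper's sketch glosses over: thresholds above $D(f_1\|f_0)$ (outside the $\lambda\in[0,1]$ range, handled by your CLT branch, which is where your standing assumption $s\to\infty$ enters), the genuine lower bound implicit in the $\doteq$ notation, and the possibility of spending the budget (\ref{eqn:budget}) unevenly across coordinates, which you handle by Markov. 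What the paper's route buys in exchange is the sharper intermediate statement $m<\min_{\lambda}\max\bigl(\log(n-s)/D(f_\lambda\|f_0),\,\log s/D(f_\lambda\|f_1)\bigr)$, which records the full exponent tradeoff rather than only the Stein endpoint, and avoids needing $s\to\infty$ since it never invokes the $\beta$ branch. The only caveats in your write-up are standard regularity assumptions you should state (finite, nondegenerate variance of $\log(f_1/f_0)$ under both measures for the CLT steps, and mutual absolute continuity for the change of measure); these are implicit in the paper as well.
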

\begin{proof}
From \cite{Cover:1991:EIT:129837}, p. 386, (Chernoff Information) and by (\ref{eqn:alpha_req}) any non-sequential test fails provided
\begin{eqnarray}
  \alpha \doteq e^{-m D(f_{\lambda}||f_{0})} > \frac{1}{n-s} \nonumber
\end{eqnarray}
or
\begin{eqnarray}
  \beta \doteq e^{-m D(f_{\lambda}||f_{1})} > \frac{1}{s} \nonumber
\end{eqnarray}
where
\begin{eqnarray} \nonumber
f_{\lambda} = \frac{f_0^\lambda f_1^{1-\lambda}}{\int_{\Omega} f_0^\lambda f_1^{1-\lambda}dy}
\end{eqnarray}
for $\lambda \in [0,1]$.
Hence, any sequential procedure fails provided
\begin{eqnarray} \nonumber
  m &<&  \min_{\lambda \in [0,1]} \max\left( \frac{\log(n-s)}{D(f_{\lambda}||f_{0})}, \frac{\log s}{D(f_{\lambda}||f_{1})}\right)
\end{eqnarray}
which is implied if
\begin{eqnarray} \nonumber
m &<& \frac{\log n }{D(f_{1}||f_{0})}
\end{eqnarray}
completing the proof.
\end{proof}

\section{Conclusion}
This paper showed sequential methods for support recovery of high dimensional sparse signals in noise can succeed using far fewer measurements than non-sequential methods.  Specifically, non-sequential methods require the number of measurements to grow logarithmically with the dimension, while sequential methods succeed if the number of measurements grows logarithmically with the level of sparsity. Additionally, a simple procedure termed sequential thresholding comes within a small additive factor of the lower bound in terms of number of measurements per dimension.

\bibliographystyle{IEEEtran}
\bibliography{Asilomar}

\begin{thebibliography}{1}
\providecommand{\url}[1]{#1}
\csname url@samestyle\endcsname
\providecommand{\newblock}{\relax}
\providecommand{\bibinfo}[2]{#2}
\providecommand{\BIBentrySTDinterwordspacing}{\spaceskip=0pt\relax}
\providecommand{\BIBentryALTinterwordstretchfactor}{4}
\providecommand{\BIBentryALTinterwordspacing}{\spaceskip=\fontdimen2\font plus
\BIBentryALTinterwordstretchfactor\fontdimen3\font minus
  \fontdimen4\font\relax}
\providecommand{\BIBforeignlanguage}[2]{{%
\expandafter\ifx\csname l@#1\endcsname\relax
\typeout{** WARNING: IEEEtran.bst: No hyphenation pattern has been}%
\typeout{** loaded for the language `#1'. Using the pattern for}%
\typeout{** the default language instead.}%
\else
\language=\csname l@#1\endcsname
\fi
#2}}
\providecommand{\BIBdecl}{\relax}
\BIBdecl

\bibitem{Haupt}
J.~Haupt, R.~Castro, and R.~Nowak, ``Distilled sensing: Selective sampling for
  sparse signal recovery,'' \emph{http://arxiv.org/abs/1001.5311}.

\bibitem{MalloyISIT}
M.~Malloy and R.~Nowak, ``Sequential analysis in high-dimensional multiple
  testing and sparse recovery,'' in \emph{Information Theory Proceedings
  (ISIT), 2011 IEEE International Symposium on}, 31 2011-aug. 5 2011, pp. 2661
  --2665.

\bibitem{5688486}
V.~Srivastava, K.~Plarre, and F.~Bullo, ``Randomized sensor selection in
  sequential hypothesis testing,'' \emph{Signal Processing, IEEE Transactions
  on}, vol.~59, no.~5, pp. 2342 --2354, may 2011.

\bibitem{BarronISIT}
A.~R. Barron and A.~Joseph, ``Analysis of fast sparse superposition codes,''
  \emph{Proc. IEEE International Symposium on Information Theory}, 2011.

\bibitem{Cover:1991:EIT:129837}
T.~M. Cover and J.~A. Thomas, \emph{Elements of information theory},
  2nd~ed.\hskip 1em plus 0.5em minus 0.4em\relax New York, NY, USA:
  Wiley-Interscience, 2005.

\bibitem{SeqAnalysis}
D.~Siegmund, \emph{Sequential Analysis}.\hskip 1em plus 0.5em minus 0.4em\relax
  New York, NY, USA: Springer-Verlag, 2010.

\bibitem{1948}
A.~Wald and J.~Wolfowitz, ``\BIBforeignlanguage{English}{Optimum character of
  the sequential probability ratio test},''
  \emph{\BIBforeignlanguage{English}{The Annals of Mathematical Statistics}},
  vol.~19, no.~3, pp. 326--339, 1948.

\bibitem{chernoffSeq}
H.~Chernoff, \emph{Sequential Analysis and Optimal Design}.\hskip 1em plus
  0.5em minus 0.4em\relax Bristol, England: Society for Industrial and Applied
  Mathematics, 1979.

\end{thebibliography}

\end{document}